   \definecolor{BLACK}{gray}{0}
   \definecolor{WHITE}{gray}{1}
   \definecolor{RED}{rgb}{1,0,0}
   \definecolor{GREEN}{rgb}{0,1,0}
   \definecolor{BLUE}{rgb}{0,0,1}
   \definecolor{CYAN}{cmyk}{1,0,0,0}
   \definecolor{MAGENTA}{cmyk}{0,1,0,0}
   \definecolor{YELLOW}{cmyk}{0,0,1,0}
\def\simgt{\mathrel{\lower2.5pt\vbox{\lineskip=0pt\baselineskip=0pt
           \hbox{$>$}\hbox{$\sim$}}}}
\def\simlt{\mathrel{\lower2.5pt\vbox{\lineskip=0pt\baselineskip=0pt
           \hbox{$<$}\hbox{$\sim$}}}}
\newtheorem{theorem}{Theorem}
\newtheorem{corollary}{Corollary}[theorem]
\newcommand{\be}{\begin{equation}}
\newcommand{\ee}{\end{equation}}
\newcommand{\bea}{\begin{eqnarray}}
\newcommand{\eea}{\end{eqnarray}}
\newcommand{\Ref}[1]{Ref.~\cite{#1}}
\newcommand{\Fig}[1]{Fig.~\ref{#1}}
\newcommand{\Sec}[1]{Sec.~\ref{#1}}
\begin{document}
\interfootnotelinepenalty=10000
\baselineskip=18pt

\hfill CALT-TH-2017-011
\hfill

\vspace{2cm}
\thispagestyle{empty}
\begin{center}
{\LARGE \bf
Bulk Connectedness and Boundary Entanglement
}\\
\bigskip\vspace{1cm}{
{\large Ning Bao$^{1,2,3}$ and Grant N. Remmen$^{1,3}$}
} \\[7mm]
{\it
$^1$Walter Burke Institute for Theoretical Physics \\[-1mm] California Institute of Technology, Pasadena, CA 91125, USA \\
$^2$Institute for Quantum Information and Matter,\\[-1mm] California Institute of Technology, Pasadena, CA 91125, USA \\
$^3$Center for Theoretical Physics and Department of Physics\\[-1mm]
    University of California, Berkeley, CA 94720, USA}
 \end{center}
\bigskip
\centerline{\large\bf Abstract}

\begin{quote} \small
We prove, for any state in a conformal field theory defined on a set of boundary manifolds with corresponding classical holographic bulk geometry, that for any bipartition of the boundary into two non-clopen sets, the density matrix cannot be a tensor product of the reduced density matrices on each region of the bipartition. 
In particular, there must be entanglement across the bipartition surface. We extend this no-go theorem to general, arbitrary partitions of the boundary manifolds into non-clopen parts, proving that the density matrix cannot be a tensor product. This result gives a necessary condition for states to potentially correspond to holographic duals. 
\end{quote}

\newpage
\tableofcontents

\newpage

\section{Introduction}
The question of which conformal field theory states can correspond to smooth, classical gravity duals must be answered in order to determine the limitations of AdS/CFT \cite{Maldacena:1997re, Witten:1998qj, Aharony:1999ti}. Approaches thus far have included constraints on the theories \cite{Heemskerk:2009pn} and on the entropies \cite{Hayden:2011ag, Bao:2015bfa}, though less work has been done to directly constrain the states themselves.

Folklore in the community has suggested that perhaps multipartite entanglement (entanglement that cannot be distilled into Bell pairs alone) may not be well suited for smooth gravity duals without some amount of bipartite entanglement, due to tension between classes of known multipartite entangled states and the Ryu-Takayanagi (RT) formula for entanglement entropy \cite{Ryu:2006bv, Hayden:2011ag, Bao:2015bfa}. Nevertheless, multiboundary wormhole geometries where the multipartite entanglement is provably existent are known, at least in three dimensions \cite{Balasubramanian:2014hda}. Again, however, less work has been done on constraining what classes of bipartite entangled states can have classical gravity duals.

In this work, we will focus on excluding certain patterns of entanglement from corresponding to smooth, classical gravity duals. All of our statements apply to the leading term in a $1/N$ expansion for a large-$N$ holographic CFT. In particular, we will demonstrate that, for conformal field theory (CFT) states defined on a set of boundaries, only those with nonzero entanglement entropy across any nontrivial subdivision are permitted. This result is consistent with the singularity in the energy-momentum tensor for factorizable states identified in \Ref{Czech:2012be}. In \Sec{sec:threeboundary}, we will gain intuition by considering a CFT defined on a set of three boundaries. We prove our main results in \Sec{sec:nogo}, first considering some requisite properties of bulk connectedness between sets of CFT boundaries and then interpreting and applying these results to study entanglement between those boundaries. In particular, we establish a set of necessary conditions on the boundary state for the existence of a holographic bulk geometry, generalized to any number of boundaries. We conclude and discuss implications of our result in \Sec{sec:discussion}.

\section{Three Boundary Case}\label{sec:threeboundary}

Let us begin by describing a bulk geometry that could hypothetically exist but that may be problematic from the perspective of information theory and holography. Consider three manifolds $A_1$, $A_2$, and $A_3$ on which we define a CFT and some pure state $|\psi\rangle$. Permit $A_1$ and $A_2$ to be connected through the bulk and similarly for $A_1$ and $A_3$, but keep $A_2$ and $A_3$ disconnected in the bulk; see Figs.~\ref{fig:unsplit} and \ref{fig:split}. Now partition $A_1$ into regions $A_1^{(2)}$ and $A_1^{(3)}$ (i.e., $A_1^{(2)} \cap A_1^{(3)}$ is empty and $A_1^{(2)} \cup A_1^{(3)} = A_1$) such that some geodesics beginning on $A_1^{(2)}$ enter the spacetime region connected with $A_2$ and some geodesics beginning on $A_1^{(3)}$ enter the region of spacetime connected with $A_3$. Holographically, the leading-order term in the entanglement entropy of a region is the area of the RT surface that subtends that region. Now construct the RT surface subtending $A_1^{(2)} \cup A_2$. This RT surface should have zero area, giving vanishing entanglement entropy. This would therefore be a nontrivial partition of the full Hilbert space that would have one side unentangled with the other. Writing the full pure state density matrix on the three CFT boundaries as $\rho=|\psi\rangle\langle\psi|$, we are disallowing the situation in which $\rho = \rho_{B_1}\otimes \rho_{B_2}$, where $\rho_{B_1}$ and $\rho_{B_2}$ are the reduced density matrices on the bipartition of $A_1 \cup A_2 \cup A_3$ into $B_1$ and $B_2$, for $B_1 = A_1^{(2)} \cup A_2$ and $B_3 = A_1^{(3)} \cup A_3$, respectively. In other words, $\rho$ cannot correspond to a separable pure state.

\begin{figure}[!tbp]
  \centering
  \begin{minipage}[b]{0.43\textwidth}
    \includegraphics[width=\textwidth]{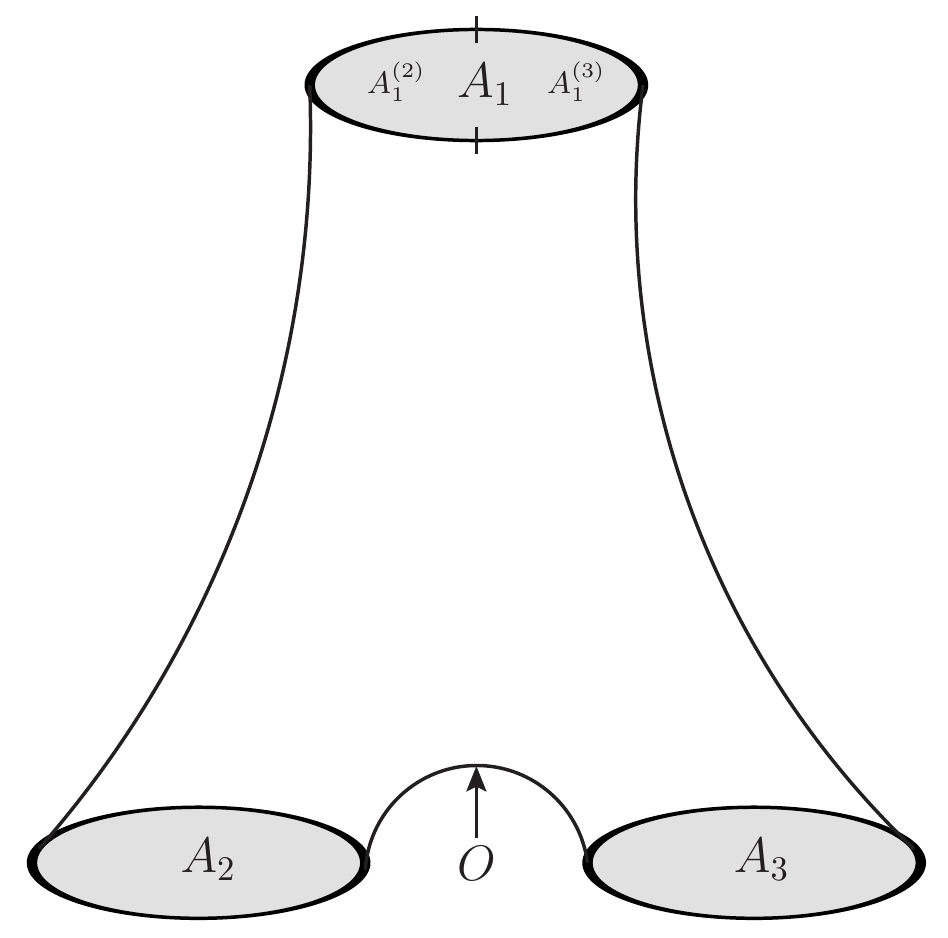}
    \caption{Situation in which $A_1$, $A_2$, and $A_3$ define the boundaries of a holographic geometry, in which there exist bulk paths connecting each pairwise combination of $A_1$, $A_2$, and $A_3$.\label{fig:unsplit}}
  \end{minipage}
  \hfill
  \begin{minipage}[b]{0.43\textwidth}
    \includegraphics[width=\textwidth]{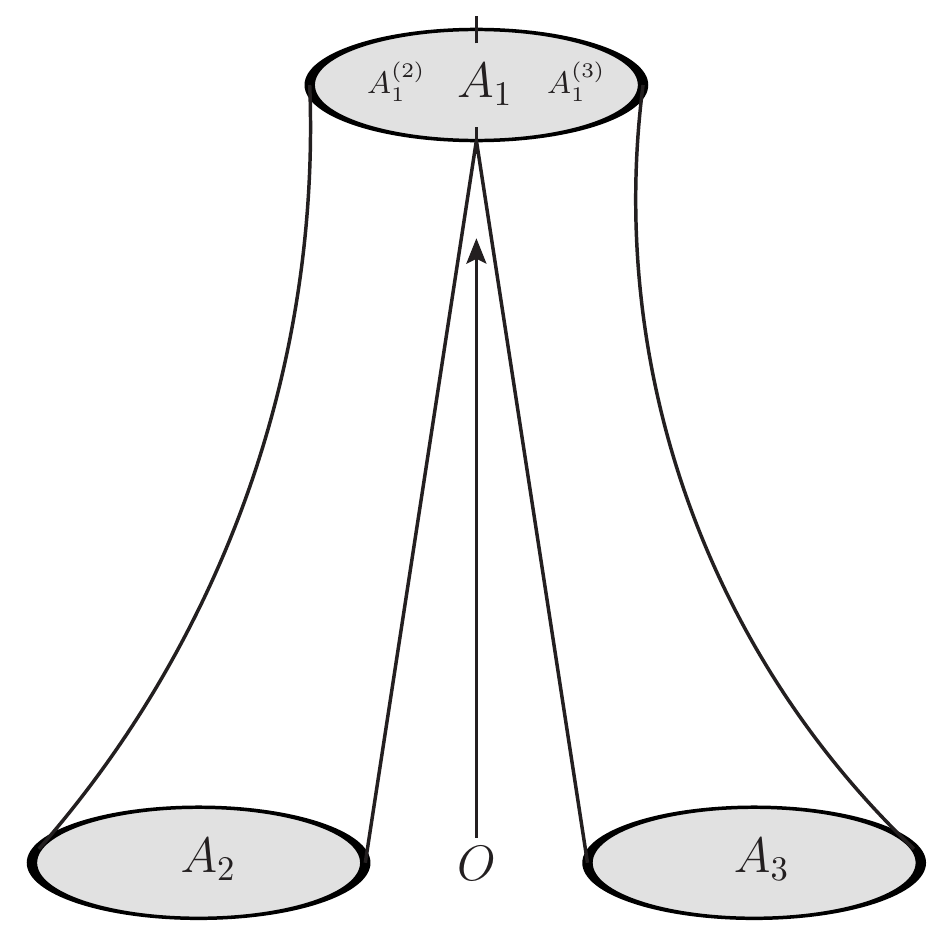}
    \caption{Situation in which $A_1$ is partitioned between $A_1^{(2)}$ and $A_1^{(3)}$, with $A_1^{(2)} \cup A_2 = B_1$ and $A_1^{(3)} \cup A_3 = B_2$ disconnected in the bulk.\newline \label{fig:split}}
  \end{minipage}
\end{figure}

The reason why a classical bulk spacetime corresponding to such a state would seem a priori surprising (and, as we will eventually see, problematic) from a holographic perspective is that there exist subregions $a_1$ and $a_2$ of $B_1 \cap A_1$ and $B_2 \cap A_1$, respectively, that can be chosen to be arbitrarily close to each other. If the entanglement entropy of $B_1$ is zero, however, this requires that there is no entanglement between $a_1$ and $a_2$, i.e., that the mutual information $I(a_1:a_2)=0$ for all $a_1$ and $a_2$. This follows from the fact that $B_1$ would be a pure state in this scenario and any reduced density matrix of a pure state has zero mutual information with any other density matrix that is not a different reduced density matrix of the original pure state. This situation would contradict the RT formula if $a_1$ and $a_2$ were sufficiently large and close together such that  the minimal surface subtending them both would not, in a conventional holographic geometry as in \Fig{fig:unsplit}, simply be the union of minimal surfaces subtending them separately. We now examine the geometry more closely, to gain intuition for what is causing this violation of expectations at the level of the RT formula. While in the remainder of this section we are simply developing intuition for the problem, we will demonstrate in \Sec{sec:nogo} that there is a genuine geometric pathology associated with CFT states of the form we wish to forbid.

Consider again the RT surface subtending $B_1$ in the geometry described above. This will correspond to the surface spanning the length of the spacetime region joining the two sides of the manifold depicted in \Fig{fig:unsplit}, touching $A_1$ at the intersection points of $B_1 \cap A_1$ and $B_2 \cap A_1$ and passing through $O$. This surface will in general have nonzero area in the bulk. Now consider the situation where $O$ is brought normally down to the CFT boundary $A_1$, as in \Fig{fig:split}. The RT surface as previously described still gives the entanglement entropy of $B_1$ via its bulk area. The bulk area of the RT surface, however, will now be zero, forcing there to be zero entanglement between $B_1$ and $B_2$. In other words, for the geometry depicted in \Fig{fig:split}, there is no surface in the bulk that is homologous to both $B_1$ and its complement $B_2$, so the RT surface cannot exist.

Motivated by this intuition, we conjecture that states of the form described, in which $B_1$ and $B_2$ are unentangled (i.e., in which the density matrix factorizes), for a CFT defined on the disjoint union of boundary manifolds $A_1$, $A_2$, and $A_3$, are forbidden from having a well-behaved holographic bulk geometry. In the next section, we will generalize this intuition into a precise statement about CFTs defined on an arbitrary number of boundaries and then prove the theorem.

\section{A No-Go Theorem for a Classical Holographic Bulk}\label{sec:nogo}

Let us now make some definitions. For a given time slice of a holographic CFT, we will denote as {\it the boundary} the collection of spacelike boundary manifolds on which the CFT is defined. We will call a surface $A$ a {\it complete boundary} if it is a connected component of the boundary and if $\partial A = 0$. Then the boundary $B$ is the disjoint union of complete boundary manifolds $A_1,...,A_n$. (We assume throughout that we work in a conformal frame such that $\partial A_i = 0$ on the slice of the spacetime we consider.) We will define a {\it bipartition} of $B$ as some identification of two disjoint subsets $B_1$ and $B_2$ of $B$ such that $B_1 \cup B_2 = B$. We will call such a bipartition {\it trivial} if $B_1$ and $B_2$ are each equal to the disjoint union of some subsets of the $A_i$, i.e., a trivial bipartition does not split any of the $A_i$ in two. Hence, a nontrivial bipartition has the property that neither $B_1$ nor $B_2$ is clopen, i.e., $\partial B_1$ and $\partial B_2$ are both nonempty.
By a {\it classical holographic bulk}, we will mean a continuous, classical spacetime manifold $\mathcal{M}$ corresponding to a CFT state, where for each spacelike slice $M$ of $\mathcal{M}$, $\partial M$ corresponds to the boundary on which the CFT is defined and, crucially, $\mathcal{M}$ is asymptotically anti-de Sitter (AdS), i.e., the metric on $M$ near each complete boundary approaches a spacelike slice of the AdS metric and, for any point in $M$, there exists a path within that slice connecting the point with one of the complete boundaries. Note that $\partial M$ is a spacelike collection of manifolds, while the boundary of the full spacetime, $\partial{\cal M}$, has Lorentzian signature and is topologically $\partial M \otimes {\mathbb R}$. Of particular importance for our arguments will be the requirement that $M$ satisfy the Hausdorff condition. While the Hausdorff condition is topological, it has a well known geometric consequence that we will use: geodesics cannot bifurcate \cite{Hajicek}. Hereafter, we will always restrict to some arbitrary spacelike slice $M$ of $\mathcal{M}$ and refer to $M$ as the classical holographic bulk. For simplicity, we will assume that the spacetime is approximately static, so that the RT formula accurately characterizes the entanglement entropy; it would be interesting to extend our results to more general spacetimes, which would require the Hubeny-Rangamani-Takayangi \cite{Hubeny:2007xt} formalism for the extremal surfaces. Given a CFT state with a classical holographic bulk, we will call boundary regions $A_1$ and $A_2$ {\it bulk (dis)connected} if there (does not) exist a path, contained entirely within the bulk, i.e., within the interior of $M$, connecting a point in $A_1$ with a point in $A_2$.

The discussion in \Sec{sec:threeboundary} then leads us to the proposition that, given a holographic CFT defined on three complete boundaries $A$, $B$, and $C$, there does not exist any classical holographic bulk in which $A$ is bulk connected with $B$ and $C$ but $B$ and $C$ are not bulk connected. Geometries violating this proposition will be pathological in a sense we will make precise. Accepting this proposition, we have a restriction on the CFT states themselves, in that for any state defined on $ABC$, any nontrivial bipartition must be entangled, i.e., the regions of any nontrivial bipartition must have nonvanishing mutual information. In the case of a pure state $\rho=\ket{\psi}\bra{\psi}$, this precludes $\rho$ from being separable; in other words, for any nontrivial bipartition of $ABC$ into regions $1$ and $2$, states in which $\rho = \rho_{1} \otimes \rho_{2}$ are forbidden from having a classical holographic bulk. 

We will now generalize these propositions to CFT boundaries with an arbitrary number of connected components, proving the results in two parts.

\begin{theorem}\label{thm:geometry} Given a CFT on a boundary consisting of an arbitrary number of complete boundaries $A_1,\ldots,A_n$ with a classical holographic bulk, for any two distinct points $x_1$ and $x_2$ in the same $A_i$, $x_1$ and $x_2$ must be bulk connected.
\end{theorem}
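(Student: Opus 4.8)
The plan is to argue by contradiction: suppose $x_1$ and $x_2$ lie on the same complete boundary $A_i$ but are bulk disconnected. Because the bulk $M$ is asymptotically AdS, a small neighborhood of $A_i$ inside $M$ looks like a spacelike slice of AdS, so there is a collar region attached to $A_i$ in which every point is connected by a short bulk path to $A_i$. The key structural claim I would extract is that the set of points in $M$ bulk connected to $x_1$ and the set bulk connected to $x_2$ — call them $U_1$ and $U_2$ — are open (bulk connectivity is an open condition, since around any interior point one can wiggle the endpoint of a path), and that $U_1 \cup U_2$ contains an entire collar neighborhood of $A_i$. Since $A_i$ is connected, the collar of $A_i$ is connected, so $U_1$ and $U_2$ must overlap somewhere in the collar; but overlap means there is a point bulk connected to both $x_1$ and $x_2$, hence $x_1$ and $x_2$ are bulk connected after all, giving the contradiction.

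To make the collar argument precise I would invoke the asymptotically-AdS condition directly: near $A_i$ the metric approaches the AdS metric, whose constant-radius slices near the conformal boundary are connected copies of $A_i$ that are joined to $A_i$ by radial geodesics lying in the bulk interior. Thus, pick a point $y$ in the interior just inside $A_i$ near $x_1$; the short radial path from $x_1$ to $y$ shows $y \in U_1$, and sliding $y$ along a constant-radius slice — which is connected, being a copy of the connected manifold $A_i$ — stays inside the collar, so the entire constant-radius slice near $A_i$ is bulk connected to $x_1$. The same holds for $x_2$. Two copies of the connected manifold $A_i$ at the same radius coincide, so the point "above" $x_1$ and the point "above" $x_2$ are the same constant-radius slice, forcing $U_1 \cap U_2 \neq \emptyset$.

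The main obstacle I anticipate is handling the case where the asymptotic region near $A_i$ is not globally a single AdS collar but could a priori pinch off or degenerate, so that "the constant-radius slice near $A_i$" might fail to be a single connected copy of $A_i$. Here I would lean on the Hausdorff hypothesis and the non-bifurcation of geodesics emphasized in the setup: the inward normal (radial) geodesics emanating from distinct points of $A_i$ cannot cross or merge, so for sufficiently small radial parameter they sweep out an embedded collar $A_i \times (0,\epsilon)$ diffeomorphic to a product, which is connected because $A_i$ is. This is precisely the place where the Hausdorff/non-bifurcation input is essential, and I expect the bulk of the write-up to be a careful statement that such an embedded collar exists and lies entirely in the interior of $M$ except for its boundary component $A_i$; once that is in hand, the connectedness contradiction is immediate.
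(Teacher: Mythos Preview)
Your argument is correct at the level of rigor of the paper, but it is organized differently from the paper's proof. The paper works in the space $U \cong A_i \times \mathbb{R}^{D-2}$ of initial data for \emph{all} bulk-entering geodesics from $A_i$, partitions $U$ according to which bulk component a geodesic enters, and uses connectedness of $U$ to locate initial data on the common boundary $V$ of the two pieces; a geodesic launched from $V$ would have to enter both components and hence bifurcate, contradicting the Hausdorff condition. Your approach instead restricts to the purely radial geodesics and pushes them a short distance into the bulk to build an embedded collar $A_i \times (0,\epsilon)$; connectedness of $A_i$ then gives an explicit bulk path between $x_1$ and $x_2$. Both arguments use the same ingredients --- asymptotic AdS behavior near $A_i$, connectedness of $A_i$, and non-bifurcation of geodesics --- but you deploy non-bifurcation \emph{constructively} (to guarantee the collar is a single embedded sheet) whereas the paper deploys it as the \emph{source of the contradiction}. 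Your route is more direct and yields an actual bulk path; the paper's route is slightly more robust in that it does not need a uniform collar width $\epsilon$ over all of $A_i$, since it never leaves the space of initial data. The one point you should tighten is your claim that radial geodesics from distinct points of $A_i$ ``cannot cross'': non-bifurcation forbids merging but not transverse intersection, so the injectivity of the collar map for small $\epsilon$ really comes from the asymptotically-AdS metric (Fefferman--Graham--type coordinates), with Hausdorff/non-bifurcation ruling out the multi-sheeted scenario you flag as the main obstacle.
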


\begin{proof}
Suppose that the theorem is false, i.e., that there exists a classical holographic bulk $M$ in which, for some $x_1, x_2\in A_i$, we have that $x_1$ and $x_2$ are not bulk connected. Let us write as $M_1$ and $M_2$ the regions that are bulk connected to $x_1$ and $x_2$, respectively. That is, $M_1$ consists of all $x\in M$ such that $x$ is bulk connected to $x_1$; a similar definition holds for $M_2$. We write as $M_1^\mathrm{c}$ the complement of $M_1$ in the bulk, so by construction $M_1$ and $M_1^\mathrm{c}$ are bulk disconnected. Note that $M_1$ and $M_2$ must be disjoint in the bulk, since if there were a point $x\in M_1\cap M_2$, one could find a bulk path from $x_1$ to $x_2$ as the union of path $p_1$ from $x_1$ to $x$ and $p_2$ from $x_2$ to $x$, which is forbidden by the hypothesis that $x_1$ and $x_2$ are bulk disconnected. Thus, $M_2 \subset M_1^\mathrm{c}$, so since $M_2$ by hypothesis exists, $M_1^\mathrm{c}$ is nonempty.

Now, given a point $x_0\in A_i$, we can consider a (spacelike) geodesic $p(\lambda)\subset M$ extending into the bulk, where $x_0=p(0)$. We can enforce that the geodesic extends into the bulk by requiring that $|g_{\mu\nu}(\mathrm{d}p)/\mathrm{d}\lambda)^\mu (\mathrm{d}p/\mathrm{d}\lambda)^\nu|_{\lambda = 0}^{1/2} > |\gamma_{ab}(\mathrm{d}p)/\mathrm{d}\lambda)^a (\mathrm{d}p/\mathrm{d}\lambda)^b|_{\lambda = 0}^{1/2}$, where $g_{\mu\nu}$ is the bulk metric and $\gamma_{ab}$ is the induced metric on the boundary. Then it becomes well defined to say that $p(\lambda)$ enters $M_1$ or $M_1^\mathrm{c}$, but not both, depending on whether $p(\lambda)\in M_1$ or $M_1^\mathrm{c}$ for infinitesimal $\lambda > 0$.

Let us now denote as $U$ the space of initial data for spacelike geodesics in $M$ starting on $A_i$ and extending into the bulk. Importantly, since $M$ is a classical holographic bulk, it is continuous and asymptotically AdS. Geodesics in $M$ therefore asymptotically approach the geodesics of a spacelike slice of the AdS metric near each complete boundary. AdS geodesics do not bifurcate, since AdS satisfies the Hausdorff condition. By continuity of $M$, $U$ is connected. (In particular, the topology of $U$ is just $A_i \otimes \mathbb{R}^{D-2}$ for a bulk of spacetime dimension $D$.) For a given point $y\in U$, we can uniquely specify whether the geodesic $p_y$ to which it corresponds enters $M_1$ or $M_1^\mathrm{c}$; let $U_1 = \{y\in U | p_y \text{ enters } M_1\}$ and $U_2 = \{y\in U | p_y \text{ enters } M_1^\mathrm{c}\}$. By construction, $U=U_1 \cup U_2$ and, since a geodesic can only enter $M_1$ or $M_1^\mathrm{c}$, $U_1$ and $U_2$ are disjoint. Note that $U_1$ are $U_2$ are both nonempty, since there exist geodesics entering $M_1$ from $x_1\in A_i$ and entering $M_2\subset M_1^\mathrm{c}$ from $x_2 \in A_i$. Define $V$ as the boundary of $U_1$, which is also the boundary of $U_2$; we have $V=\bar{U}_1 \cap \bar{U}_2$. Since $U$ is connected and $U_1$ and $U_2$ are proper subsets, neither $U_1$ nor $U_2$ are clopen and hence $V$ is nonempty. By continuity of $M$, geodesics defined by initial data in $V$ must enter both $M_1$ and $M_1^\mathrm{c}$. Hence, for any $y\in V$, the geodesic $p_y$ bifurcates, in contradiction with the requirement that $M$ be Hausdorff. This contradiction completes the proof.
\end{proof}

The subtlety in proving Theorem~\ref{thm:geometry} is that, in general, bulk connectedness does not imply boundary connectedness. Given two points $x_1$ and $x_2$ in the same connected component of the boundary, one cannot in general take a path through the boundary from $x_1$ to $x_2$ and deform it into a bulk path connecting the two points. In essense, Theorem~\ref{thm:geometry} formally proves that whenever this deformation procedure fails, it necessarily implies the existence of bifurcate geodesics and hence a violation of the Hausdorff condition.

We note an immediate consequence of this result for the connectivity structure of the classical holographic bulk.

\begin{corollary}\label{cor:connected} Given a CFT on a boundary consisting of an arbitrary number of complete boundaries $A_1,\ldots,A_n$, there does not exist a classical holographic bulk in which $A_i$ is bulk connected with $A_j$ and $A_k$, but $A_j$ and $A_k$ are bulk disconnected.
\end{corollary}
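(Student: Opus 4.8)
The plan is to derive Corollary~\ref{cor:connected} directly from Theorem~\ref{thm:geometry} by a concatenation-of-paths argument, with no new geometric input. Suppose for contradiction that there is a classical holographic bulk $M$ in which $A_i$ is bulk connected with both $A_j$ and $A_k$, yet $A_j$ and $A_k$ are bulk disconnected. By the definition of bulk connectedness, there is a path $p_1$ lying in the interior of $M$ running from a point $x_j\in A_j$ to a point $x_1\in A_i$, and likewise a path $p_2$ in the interior of $M$ running from a point $x_2\in A_i$ to a point $x_k\in A_k$. In general $x_1$ and $x_2$ need not coincide; this is precisely the scenario sketched in \Sec{sec:threeboundary}, where the portion of $A_i$ whose geodesics feed toward $A_j$ may be distinct from the portion feeding toward $A_k$.

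The key step is to bridge $x_1$ and $x_2$ through the bulk. Since $x_1$ and $x_2$ both lie in the single complete boundary $A_i$, Theorem~\ref{thm:geometry} guarantees that they are bulk connected, so there exists a path $p_3$ in the interior of $M$ joining them. Concatenating $p_1$, then $p_3$, then $p_2$ then produces a continuous path from $x_j\in A_j$ to $x_k\in A_k$, and the claim is that this path witnesses bulk connectedness of $A_j$ and $A_k$, contradicting the hypothesis.

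The one point deserving care is that the concatenated path must be taken to lie entirely in the interior of $M$, whereas the join points $x_1,x_2$ — and, strictly, the endpoints $x_j,x_k$ — sit on $\partial M$. I would handle this by a routine perturbation: near each such point the path can be pushed a small distance into the interior along the inward normal, using that $M$ is a manifold with boundary and that the constituent paths approach the boundary transversally (the same transversality condition enforced for geodesics in the proof of Theorem~\ref{thm:geometry}). Equivalently, one may simply observe that isolated touchings of $\partial M$ do not affect whether $A_j$ and $A_k$ are bulk connected. Either way one obtains a genuine bulk path from $A_j$ to $A_k$, completing the contradiction.

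I do not expect a real obstacle here: essentially all of the content is already packaged in Theorem~\ref{thm:geometry}, and the remainder is the elementary observation that ``bulk connected'' behaves transitively once one is permitted to route through a shared complete boundary. If there is any subtlety at all, it is exactly making this transitivity rigorous despite the shared endpoints necessarily lying on the boundary, which is why I would spell out the perturbation step rather than leave it implicit.
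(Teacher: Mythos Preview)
Your argument is correct and, in fact, a little cleaner than the paper's. The paper also proceeds by contradiction and picks $x_1,x_2\in A_i$ with $x_1$ bulk connected to $A_j$ and $x_2$ bulk connected to $A_k$, but then splits into two cases. When $x_1\neq x_2$, the paper argues (implicitly by the same concatenation you use) that $x_1$ and $x_2$ must be bulk disconnected and invokes Theorem~\ref{thm:geometry} for the contradiction; this is just the contrapositive of your step. When $x_1=x_2$, however, the paper does \emph{not} concatenate paths but instead reruns the geodesic-bifurcation argument from the proof of Theorem~\ref{thm:geometry}, now restricted to the initial-data space $U_x$ at the single point $x$, to exhibit a Hausdorff violation directly. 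Your approach absorbs this case for free (the bridge $p_3$ is trivial), at the cost of needing the perturbation step to push the join point off $\partial M$; the paper's case split avoids that perturbation but repeats geometric work already packaged in Theorem~\ref{thm:geometry}. Either route is fine, and your handling of the boundary-touching issue via a collar-neighborhood perturbation is the right way to make the transitivity rigorous.
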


\begin{proof}
We again proceed by contradiction, assuming that there exists a classical holographic bulk $M$ with boundary consisting of some complete boundary manifolds $A_1,\ldots,A_n$ such that, for some $A_{i,j,k}$, $A_i$ is bulk connected to $A_j$ and $A_k$, but $A_j$ and $A_k$ are bulk disconnected. 

We choose points $x_1$ and $x_2$ in $A_i$ such that $x_1$ is bulk connected to $A_j$ and $x_2$ is bulk connected to $A_k$. If $x_1 = x_2$ (hereafter, $x$), then we have chosen a point that is bulk connected with both $A_j$ and $A_k$. The space $U_x$ of initial data for (spacelike) geodesics originating on $x$ and extending into the bulk (within $M$) is just a subset of $U$ defined in Theorem~\ref{thm:geometry}, for which $U_1 \cap U_x$ and $U_2 \cap U_x$ are both nonempty. Since the topology of $U_x$ in an asymptotically AdS spacetime is $\mathbb{R}^{D-2}$ and hence $U_x$ is connected, it follows that neither $U_1 \cap U_x$ nor $U_2 \cap U_x$ are clopen. Thus, just as in Theorem~\ref{thm:geometry}, the set $V_x = V \cap U_x$ is nonempty and, by continuity of the spacetime, geodesics corresponding to initial data in $V_x$ must originate on $x$ and bifurcate, entering both $A_j$ and $A_k$. This situation violates the Hausdorff condition, in contradiction with the requirement that $M$ be Hausdorff. 

Similarly, if $x_1 \neq x_2$, then we have distinct points in the same boundary manifold that are not bulk connected, in which case Theorem~\ref{thm:geometry} immediately applies, forbidding this setup as a classical holographic bulk.
\end{proof}

Again, Corollary~\ref{cor:connected} formalizes the intuition that if one can draw a path from $A_j$ to $A_k$, possibly containing a segment passing through some boundary surface $A_i$, then one can deform this path into the bulk. While this is not possible for general non-Hausdorff manifolds, the deformation can only fail in spacetimes containing bifurcating geodesics and hence that violate the Hausdorff condition.

Let us now interpret Theorem~\ref{thm:geometry} in terms of the entanglement structure of the CFT state.

\begin{theorem}\label{thm:entanglement} Given a pure CFT state defined on a boundary that possesses a classical holographic bulk, the boundary regions $B_1$ and $B_2$ defined by any nontrivial bipartition must be entangled with each other. That is, if the state of the full boundary is pure, with density matrix $\rho$, we have $\rho \neq \rho_{B_1}\otimes\rho_{B_2}$.\end{theorem}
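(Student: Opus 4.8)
The plan is to argue by contradiction. Suppose that $\rho=\rho_{B_1}\otimes\rho_{B_2}$ for some nontrivial bipartition. A product density matrix is pure only if each factor is pure, since $\mathrm{Tr}\,\rho^2=(\mathrm{Tr}\,\rho_{B_1}^2)(\mathrm{Tr}\,\rho_{B_2}^2)$ and each factor is at most one; hence the purity of $\rho=\ket{\psi}\bra{\psi}$ forces $\rho_{B_1}$ to be pure, so $S(B_1)=0$ and, by the RT formula, the minimal bulk surface homologous to $B_1$ has vanishing area. The goal is to show that this is impossible whenever the bipartition is nontrivial.

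Since the bipartition is nontrivial, at least one complete boundary $A_i$ is split between $B_1$ and $B_2$, so $A_i\cap B_1$ and $A_i\cap B_2$ are both nonempty and their common frontier contains a nonempty piece $\sigma\subseteq\partial B_1\cap A_i$. By Theorem~\ref{thm:geometry}, every pair of points of $A_i$ is bulk connected; in particular, points of $A_i$ on opposite sides of $\sigma$ are joined by bulk paths, so the bulk does not pinch off at $\sigma$ and a neighborhood of $\sigma$ is a smooth asymptotically-AdS region with $\sigma$ on the conformal boundary. Consequently $B_1$ admits a genuine (finite) RT surface $\gamma_{B_1}$, which is anchored on $\partial B_1\supseteq\sigma$ and penetrates the asymptotic region near $\sigma$.

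But any bulk surface anchored on a nonempty subset of the conformal boundary in an asymptotically-AdS geometry carries strictly positive---indeed UV-divergent---area there, so $\mathrm{Area}(\gamma_{B_1})>0$ and $S(B_1)>0$, contradicting $S(B_1)=0$. Equivalently, and more concretely, one may localize the obstruction: choosing subregions $a_1\subseteq A_i\cap B_1$ and $a_2\subseteq A_i\cap B_2$ adjacent across $\sigma$ and meeting only along it, the entangling surfaces of $a_1$ and $a_2$ both include $\sigma$ while that of $a_1\cup a_2$ does not, so the RT formula gives $I(a_1:a_2)=S(a_1)+S(a_2)-S(a_1\cup a_2)>0$ (the $\sigma$-contribution is uncancelled); yet purity of $\rho_{B_1}$ implies $\rho_{a_1\cup a_2}=\rho_{a_1}\otimes\rho_{a_2}$ and hence $I(a_1:a_2)=0$. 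Either way we reach a contradiction, so $\rho\neq\rho_{B_1}\otimes\rho_{B_2}$.

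The main obstacle is justifying that $\gamma_{B_1}$ cannot be a zero-area (degenerate) surface---equivalently, that the $\sigma$-anchored area is genuinely present and uncancelled. This is where Theorem~\ref{thm:geometry} does the real work: absent it, the bulk could pinch off along $\sigma$, as in the three-boundary discussion of \Sec{sec:threeboundary} and \Fig{fig:split}, in which case there is no surface homologous to $B_1$ and the RT prescription returns no finite answer at all, stalling the argument. Theorem~\ref{thm:geometry} excludes precisely this pathology and forces $\gamma_{B_1}$ to be a genuine bulk surface reaching the asymptotic AdS region, where it has positive area. One may also note that the conclusion is the holographic reflection of the field-theory fact that the entanglement entropy of any region with nonempty entangling surface is UV-divergent, so that $\rho_{B_1}$ can never literally be pure; the theorem pins down the precise geometric mechanism---failure of the homology constraint---behind this.
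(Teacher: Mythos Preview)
Your argument is correct and uses the same ingredients as the paper, but you run the logic in the opposite direction. The paper argues: $S(B_1)=0$ forces the minimal separating surface to have zero area, hence to be empty, hence $B_1$ and $B_2$ are bulk disconnected; picking $x_1\in A\cap B_1$ and $x_2\in A\cap B_2$ then contradicts Theorem~\ref{thm:geometry} directly. You instead invoke Theorem~\ref{thm:geometry} first to rule out pinching at $\sigma$, conclude that the RT surface for $B_1$ is a genuine surface anchored on $\sigma$ in the asymptotic region, and then use the UV divergence of anchored minimal surfaces to force $S(B_1)>0$. The paper's route is a little cleaner because it avoids appealing to the area divergence and needs only the separating property of the RT surface; your route has the virtue of making explicit the field-theory intuition (nonempty entangling surface $\Rightarrow$ divergent entropy) and where exactly Theorem~\ref{thm:geometry} enters.

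One small wording issue: in your mutual-information variant, the factorization $\rho_{a_1\cup a_2}=\rho_{a_1}\otimes\rho_{a_2}$ follows from the assumed product structure $\rho=\rho_{B_1}\otimes\rho_{B_2}$ (with $a_1\subset B_1$, $a_2\subset B_2$), not from the purity of $\rho_{B_1}$ per se. This does not affect the validity of the argument.
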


\begin{proof}

Suppose the theorem is false, so that for some CFT state with a classical holographic bulk there exists a nontrivial bipartition of the boundary into $B_1$ and $B_2$ such that $B_1$ and $B_2$ are unentangled. Since the bipartition is nontrivial, there must exist some complete boundary $A$ such that $A\cap B_1$ and $A \cap B_2$ are both nonempty.

In general, the RT surface associated with the entanglement between $B_1$ and $B_2$ is given by a bulk surface that is codimension-one within $M$ (codimension-two within $\mathcal{M}$) and is a disjoint union of complete codimension-one surfaces within each connected component of the bulk. By {\it complete} codimension-one surfaces, we mean surfaces $\Sigma$ such that, within a given connected component of $M$, any bulk path from $B_1$ to $B_2$ must pass through $\Sigma$. Among all complete codimension-one surfaces separating $B_1$ and $B_2$, the RT surface is the one with minimal area in $M$. The fact that $B_1$ and $B_2$ are unentangled implies that the RT surface associated with $B_1$ and $B_2$ has zero area and hence the surface does not exist in the bulk. Therefore, there must exist no bulk path from any point in $B_1$ to any point in $B_2$, since such a path would be required to pass through the nonexistent RT surface.

We therefore choose some $x_1 \in B_1 \cap A$ and $x_2 \in B_2 \cap A$. Since $B_1$ and $B_2$ are disjoint, $x_1$ and $x_2$ are distinct and, by the argument above, must be bulk disconnected. Theorem~\ref{thm:geometry} then applies, which implies that the geometry is not a classical holographic bulk. This contradiction completes the proof.
\end{proof}

We can actually generalize this result to partitions of the boundary into multiple parts. Let us define a {\it nontrivial partition} of the boundary $B$ as a partition of $B$ into $B_1,\ldots,B_n$, $B_i \cap B_j=\emptyset$, $\cup_i B_i=B$, such that $\partial B_i \neq 0,$ i.e., $B_i$ is not clopen for all $i$. We then have the following result.

\begin{corollary}
For any nontrivial partition of $B$ into $B_1,\ldots,B_n$, define $\rho_i$ to be the reduced density matrix on $B_i$. Then for a pure CFT state that corresponds to a classical holographic bulk, $\rho \neq \rho_i \otimes \cdots \otimes \rho_n$.
\end{corollary}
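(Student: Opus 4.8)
The plan is to reduce this multipartite statement to the bipartite Theorem~\ref{thm:entanglement} by coarse-graining the partition. Suppose for contradiction that $\rho = \rho_1 \otimes \cdots \otimes \rho_n$ for a pure CFT state with a classical holographic bulk and some nontrivial partition $B_1,\ldots,B_n$ of the boundary. Group the parts into two pieces, $B_1$ on one side and $\tilde{B} := B_2 \cup \cdots \cup B_n$ on the other; this defines a bipartition $(B_1,\tilde{B})$ of the full boundary $B$.

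First I would verify that this bipartition is nontrivial in the sense required by Theorem~\ref{thm:entanglement}. By hypothesis the partition is nontrivial, so $\partial B_1 \neq 0$, i.e. $B_1$ is not clopen. Since $\tilde{B}$ is exactly the complement of $B_1$ within $B$, and the boundary of a set equals the boundary of its complement, $\partial \tilde{B} = \partial B_1 \neq 0$, so $\tilde{B}$ is also not clopen. Hence $(B_1,\tilde{B})$ is a nontrivial bipartition, and there is some complete boundary $A$ with $A\cap B_1$ and $A\cap\tilde{B}$ both nonempty.

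Next, I would observe that tracing out the appropriate tensor factors from the assumed product form gives $\rho_{B_1} = \rho_1$ and $\rho_{\tilde{B}} = \rho_2 \otimes \cdots \otimes \rho_n$, using the elementary fact that the partial trace of a product of (normalized) density matrices over a subset of its factors is the product of the remaining factors. In particular, $\rho = \rho_{B_1} \otimes \rho_{\tilde{B}}$, so the full density matrix factorizes across the nontrivial bipartition $(B_1,\tilde{B})$. This is precisely what Theorem~\ref{thm:entanglement} forbids for a pure state with a classical holographic bulk, and the contradiction completes the proof. (One can note that the identical argument applies to any grouping of the parts into two, so in fact no coarse-graining of a nontrivial partition can factorize, but the stated corollary already follows from the single grouping above.)

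The main obstacle — indeed essentially the only nonroutine point — is confirming that a coarse-graining of a nontrivial partition remains a nontrivial bipartition, which rests entirely on the topological identity $\partial B_1 = \partial(B \setminus B_1)$; the remaining steps are standard manipulations of partial traces, and the geometric content is entirely imported from Theorem~\ref{thm:geometry} via Theorem~\ref{thm:entanglement}.
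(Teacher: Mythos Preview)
Your proof is correct and follows essentially the same approach as the paper: reduce the multipartite factorization to a contradiction with Theorem~\ref{thm:entanglement} by coarse-graining into a bipartition. The paper's one-line proof speaks of ``iteratively applying'' Theorem~\ref{thm:entanglement}, whereas you observe that a single application (grouping $B_1$ against $B_2\cup\cdots\cup B_n$) already suffices, since the nontriviality hypothesis guarantees $\partial B_1\neq 0$ and hence $\partial\tilde B\neq 0$; this is a minor streamlining of the same idea rather than a different route.
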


\begin{proof}
This result follows from iteratively applying Theorem~\ref{thm:entanglement}, in each successive step taking a nontrivial bipartition of a part of the previous bipartition.
\end{proof}

\section{Discussion}\label{sec:discussion}
We have proven that that there must be entanglement across any nontrivial bipartition of boundary CFT regions in any holographic theory via proof by contradiction with the Hausdorff condition in the dual spacetime. More specifically, our result holds for a state of any holographic theory that corresponds to a classical holographic bulk spacetime. In a related work, \Ref{Miyaji:2014mca} demonstrated that a certain class of conformally invariant boundary states have essentially zero real-space entanglement for {\it any} nonzero bipartition, furthermore suggesting that these states are dual to trivial bulks of zero spacetime volume. This is consistent in the context of the present work, since such states are forbidden by our no-go theorem and such spacetimes would have nonexistent bulk (and hence trivially would not qualify as classical holographic bulks per our definitions in \Sec{sec:nogo}). 

While we proved our result using the tools of bulk geometry, it is worthwhile understanding our conclusions from the perspective of the CFT alone. In particular, the Hadamard condition \cite{Kay:1988mu} on the CFT state $|\psi\rangle$ means that its ultraviolet properties are well approximated by the ground state; in the dual geometric description, the usual holographic intuition about bulk depth corresponding to renormalization group flow means that this requirement on $|\psi\rangle$ corresponds to the condition that the bulk be asymptotically AdS. By the Reeh-Schlieder theorem (see \Ref{Witten:2018zxz} and refs. therein), the vacuum state of a Poincar\'e-invariant quantum field theory is cyclic and separating for spacelike-separated open sets. As one can verify by acting on the vacuum with the Hamiltonian, the cyclicity property implies that the pure vacuum state cannot have a product state structure. In Theorem~\ref{thm:entanglement}, we reached this same conclusion geometrically, for a general excited state that is dual to an asymptotically AdS spacetime.

We will now recast our result in graph-theoretic language that can more succinctly encode this requisite entanglement structure, in particular showing that the only entanglement structures consistent with holography can all be represented as disjoint unions of complete graphs, as we will elaborate below.

Corollary~\ref{cor:connected} states that any pair of CFT boundaries must be connected to each other purely within the bulk if there exists a ``chain'' of bulk connections from one boundary to the next by which they can be linked. In other words, for a set of CFT boundaries $A_1,\ldots,A_n$ such that there is a bulk connection from $A_i$ to $A_{i+1}$ for all $i<n-1$, $A_1$ and $A_n$ must be bulk connected, i.e., there must exist a path from $A_1$ to $A_n$ that does not pass through any boundary manifold; it then follows that every pair of $A_i$ and $A_j$ must similarly be bulk connected. Let us represent a classical holographic bulk as a graph, with a vertex representing each boundary manifold and an edge between any two vertices that are bulk connected with each other. Consider a connected subgraph, containing vertices corresponding to bulk connected boundaries $A_1$ and $A_2$. If $A_2$ is bulk connected with another boundary $A_3 \neq A_1$, then Corollary~\ref{cor:connected} forces $A_1$ to be bulk connected with $A_3$ as well. In the graph piccture, this forces the existence of an edge between $A_1$ and $A_3$; continuing this reasoning enforces that the connected subgraph be a complete graph. As there can be CFT boundaries that are simply not associated by a chain of bulk connections, the bulk connectivity graph of the set of all CFT boundaries, for any state corresponding to a classical holographic bulk, must be a disconnected union of complete subgraphs.

Note that our bound does not imply that every subregion of the boundary is entangled with every other subregion. For example, consider a large bulk region (with boundary $A$) that contains two small black holes, each connected via a wormhole with a partner black hole in another asymptotically AdS region of spacetime (with boundaries $B$ and $C$, respectively). Then our results permit a scenario in which some small subsets $A_1$ and $A_2$ of the degrees of freedom on $A$ are in approximately thermofield double states with the degrees of freedom on $B$ and $C$, respectively, with $B$ and $C$ being unentangled. This setup is not forbidden by our result, since $A_1 \cup B$ and $A_2 \cup C$ do not make up the entire boundary and hence do not constitute a bipartition. In order to rule out scenarios in which $A_1 \cup B$ and $A_2 \cup C$ are unentangled, we would be required to include all of $A$ within either $A_1$ or $A_2$.

An intuitive way to state our result from the CFT perspective is that, for the forbidden states, the spacetime would necessarily need access to scales at which classical general relativity does not apply, for the geometry to be precisely defined all the way to the UV. There is, then, no consistent way to impose a cutoff in this theory. From the AdS perspective, states of the forbidden form necessarily have gravity-side pathologies that extend all the way to the boundary and thus prevent the geometry from being asymptotically AdS.

One can also extend this result to geometries with black hole singularities in the bulk, by considering the purification of the bulk geometry with singularities into a multiboundary pure state wormhole geometry. Then, one can take the nontrivial bipartition to be between some subset $A_1$ of the pre-purification boundary and its complement $A_2$ plus the post-purification boundaries $B$. As the pure state by our theorem cannot be of the form $\rho_{A_1}\otimes\rho_{A_2B}$, if at this point we perform a partial trace over $B$, the resulting mixed density matrix of the initial unpurified mixed state cannot be of the form $\rho_{A_1}\otimes\rho_{A_2}$. Thus, similar results would constrain the forms of mixed sates corresponding to one-sided black hole geometries. We note also that while, for mixed states, this is insufficient to prove nonseparability, it may be a step in that direction.

Thus, we see that the entanglement entropy is strongly constrained to be nonzero for any nontrivial bipartition of the CFT state on the boundary for which a classical holograhic bulk exists and, moreover, that the existence of a chain of bulk connections between a set of CFT boundaries requires the existence of a complete graph of bulk connections. This result complements the consequences for the known constraints on holographic entanglement entropy~\cite{Hayden:2011ag, Bao:2015nqa, Bao:2015bfa}, in that the set of states that it allows or excludes is independent of and of a different character than those allowed or excluded by the previously known constraints. Furthermore, our result gives geometric and topological justification to the suggestion of \Ref{Czech:2012be} that factorizability of the density matrix leads to a singularity in the energy-momentum tensor. This pathology suggests the possibility that, under certain requirements, our results forbidding density matrix factorization can hold in contexts outside of holography, though this lies outside the scope of the present work. In the future, it would be interesting to study what fraction of quantum states under some appropriate measure satisfies our criterion, to assess just how constraining a requirement this is on the space of holographic theories. Moreover, it would be interesting to consider the extension of this statement to an ER=EPR perspective \cite{Maldacena:2013xja, Remmen:2016wax}.

\begin{center} 
 {\bf Acknowledgments}
 \end{center}
 \noindent 

We thank the referees for their helpful suggestions. We also thank Raphael Bousso, Sean Carroll, Netta Engelhardt, and Mukund Rangamani for useful discussions and comments. This research was supported in part by DOE grant DE-SC0011632 and by the Gordon and Betty Moore Foundation through Grant 776 to the Caltech Moore Center for Theoretical Cosmology and Physics. N.B. was supported at Caltech by the DuBridge postdoctoral fellowship at the Walter Burke Institute for Theoretical Physics and is currently supported at the University of California, Berkeley by the National Science Foundation under grant number 82248-13067-44-PHPXH. G.N.R. was supported at Caltech by a Hertz Graduate Fellowship and a NSF Graduate Research Fellowship under Grant No.~DGE-1144469 and is currently supported at University of California, Berkeley by the Miller Institute for Basic Research in Science.

\bibliography{multiboundary}
\bibliographystyle{utphys}

\end{document}